\newtheorem{theo}{Theorem}[section]
\newtheorem{prop}[theo]{Proposition}
\newtheorem{coro}[theo]{Corollary}
\newtheorem{lemm}[theo]{Lemma}
\newtheorem{defi}[theo]{Definition}
\newtheorem{exam}[theo]{Example}
\newtheorem{rema}[theo]{Remark}
\newenvironment{proof}{\textit{Proof.} }{$\Box$}
\newcommand{\eps}{\varepsilon} 
\newcommand{\cosl}{\!\!\downarrow\!\!} 
\newcommand{\tuple}[1]{\langle#1\rangle} 
\newcommand{\tu}{\langle\,\rangle} 
\newcommand{\curry}[1]{\ulcorner\!{#1}\!\urcorner} 
\newcommand{\kl}[1]{[#1]} 
\newcommand{\PO}{{\scriptstyle [P.O.]}} 
\newcommand{\stimes}{\!\times\!} 
\newcommand{\md}{*}  
\newcommand{\hsp}{\hspace{5mm}}  
\newcommand{\col}[1]{\multicolumn{1}{c}{#1}}
\newcommand{\isoto}{\stackrel{\simeq}{\rightarrow}} 
\newcommand{\iso}{\cong} 
\newcommand{\To}{\Rightarrow} 
\newcommand{\trnat}{\begin{turn}{-20}\ensuremath{\Uparrow}\end{turn}} 
\newcommand{\rpto}{\rightsquigarrow} 
\newcommand{\bZ}{\mathbb{Z}} 
\newcommand{\bL}{\mathbb{L}} 
\newcommand{\bA}{\mathbb{A}} 
\newcommand{\setX}{\mathbb{X}}  
\newcommand{\setY}{\mathbb{Y}}   
\newcommand{\uno}{1}  
\newcommand{\catS}{\mathbf{S}}   
\newcommand{\catT}{\mathbf{T}}   
\newcommand{\eq}{\mathit{eq}} 
\newcommand{\dec}{\mathit{dec}} 
\newcommand{\param}{\mathit{par}} 
\newcommand{\Tt}{\Theta} 
\renewcommand{\tt}{\theta} 
\newcommand{\Ss}{\Sigma} 
\renewcommand{\ss}{\sigma} 
\newcommand{\Set}{\mathit{Set}} 
\newcommand{\Mod}{\mathit{Mod}} 
\newcommand{\sgp}{\mathit{sgp}} 
\newcommand{\mon}{\mathit{mon}} 
\newcommand{\dm}{\mathit{dm}} 
\newcommand{\nat}{\mathit{nat}} 
\newcommand{\st}{\mathit{st}} 
\newcommand{\oper}{\mathit{op}} 
\newcommand{\id}{\mathit{id}} 
\newcommand{\proj}{\mathit{pr}}  
\newcommand{\prd}{\mathit{prd}} 
\newcommand{\unt}{e} 
\newcommand{\dif}{\mathit{dif}} 
\newcommand{\lup}{\mathit{lookup}} 
\newcommand{\upd}{\mathit{update}} 
\newcommand{\ttE}{\mathtt{E}}
\newcommand{\Type}{\mathtt{Type}}
\newcommand{\Term}{\mathtt{Term}}
\newcommand{\Selid}{\mathtt{Selid}}
\newcommand{\Comp}{\mathtt{Comp}}
\newcommand{\bProd}{\mathtt{2\texttt{-}Prod}}
\newcommand{\zProd}{\mathtt{0\texttt{-}Prod}}
\newcommand{\bTuple}{\mathtt{2\texttt{-}Tuple}} 
\newcommand{\zTuple}{\mathtt{0\texttt{-}Tuple}} 
\newcommand{\deco}{\mathtt{.x}} 
\newcommand{\dotp}{\mathtt{.p}} 
\newcommand{\dotg}{\mathtt{.g}} 
\newcommand{\ttc}{\mathtt{c}} 
\title{A parameterization process, functorially} 
\author{
C\' esar Dom\'\i nguez \thanks{
Departamento de Matem\'aticas y Computaci\'on,
Universidad de La Rioja,
Edificio Vives, Luis de Ulloa s/n, E-26004 Logro\~no, La Rioja, Spain,
cesar.dominguez@unirioja.es.}
\and Dominique Duval \thanks{
Laboratoire Jean Kuntzmann, Universit\'e de Grenoble, 
51 rue des math\'ematiques, BP 53, F-38041 Grenoble C\'edex 9, France, 
Dominique.Duval@imag.fr.}
}
\date{August 31., 2009}
\begin{document}

\maketitle

\begin{itemize}
\item[] \textbf{Abstract.}
The parameterization process used in the symbolic computation systems 
Kenzo and EAT is studied here as a general construction in a categorical framework.
This parameterization process starts from a given specification 
and builds a parameterized specification 
by adding a parameter as a new variable to some operations. 
Given a model of the parameterized specification, 
each interpretation of the parameter, called an argument, 
provides a model of the given specification.
Moreover, under some relevant terminality assumption, 
this correspondence between the arguments and the models of the 
given specification is a bijection. 
It is proved in this paper that the parameterization process 
is provided by a functor
and the subsequent parameter passing process by a natural transformation.
Various categorical notions are used, mainly  
adjoint functors, pushouts and lax colimits.
\end{itemize}

\section{Introduction}


Kenzo \cite{Kenzo} and its predecessor EAT \cite{EAT} are software
systems developed by F. Sergeraert. They are devoted to
Symbolic Computation in Algebraic Topology. In particular, they 
carry out calculations of homology groups of complex topological
spaces, namely iterated loop spaces. By means of EAT and Kenzo,
some homology groups that had never been obtained with any other
method, neither theoretical nor automatic, have been computed. In
view of the obtained results, some years ago, the first author 
of this paper began the formal study of the programs, in order to
reach a good understanding on the internal calculation processes
of these software systems.
In particular, our study of the data types used in EAT and
Kenzo \cite{LPR03,DLR07,DRS06} 
shows that there are two different layers of data structures in the systems.
In the first layer, one finds the usual abstract data types,
like the type of integers. 
In the second layer, one deals with algebraic structures,
like the structure of groups, 
which are implemented thanks to the abstract data types belonging to the first layer. 
In addition, we realized that in a system such as EAT, 
we do not simply implement one group, 
but more generally \emph{parameterized} families of groups. 
In \cite{LPR03} an operation is defined, 
which is called the \emph{imp} construction
because of its role in the implementation process in the system EAT.
Starting from a specification $\Ss$
in which some operations are labelled as ``pure'' \cite{DRS06}, 
the \emph{imp} construction builds a new specification $\Ss_A$ 
with a distinguished sort $A$
which is added to the domain of each non-pure operation.
It follows that each implementation of $\Ss_A$ 
defines a family of implementations of $\Ss$ 
depending on the choice of a value in the interpretation of $A$.
Besides, working with the \emph{imp} construction in \cite{LPR03} we were able
to prove that the implementations of EAT algebraic structures are as general as
possible, in the sense that they are ingredients of terminal objects in
certain categories of models;
this result is called the \emph{exact parameterization property}.
Later on, led by this characterization of EAT algebraic structures, 
in \cite{LPR03} we reinterpreted our results in terms of object-oriented technologies
like hidden algebras \cite{GM00} or coalgebras \cite{Ru00}.


This paper deals with generalization by parameterization in the sense of Kenzo and EAT,
so that our \emph{parameters} are symbolic constants of a given type, 
that will be replaced by \emph{arguments} which are elements in a given set.
The notion of parameterization in programming and specification languages 
bears several meanings, where the parameter may be a type or a specification.
For instance, in object-oriented programming, 
parametric polymorphism is called generic programming, 
in C++ it is characterized by the use of template parameters 
to represent abstract data types. 
On the other hand, in algebraic specifications, 
a parameterized specification is defined as a morphism of specifications
where the parameter is the source 
and the parameter passing is defined as a pushout \cite{ADJ80}. 


The framework for this paper is provided by \emph{equational logic}, 
considered from a categorical point of view.  
An equational theory, or simply a theory, is a category with chosen finite products. 
A model $M$ of a theory $\Tt$ is a functor $M\colon \Tt\to\Set$ which maps 
the chosen products to cartesian products. 
A theory $\Tt$ can be presented by a specification $\Ss$,
this means that $\Ss$ generates $\Tt$. 
In this paper, we are not interested in specifications for themselves, 
but as presentations of theories.
So, specifications are used mainly in the examples, 
and we feel free to modify a specification whenever needed 
as long as the presented theory is not changed.


The \emph{parameterization process} studied in this paper 
is essentially the ``\emph{imp} construction'' of \cite{LPR03}.
Starting from a theory $\Tt$ it provides a \emph{parameterized theory}~$\Tt_A$
by adding a \emph{type of parameters} $A$ 
and by transforming each term $f\colon X\to Y$ in $\Tt$ into 
a parameterized term $f'\colon A\times X\to Y$ in $\Tt_A$.
Then clearly $\Tt_A$ generalizes $\Tt$:
the models of $\Tt$ can be identified to the models of $\Tt_A$
which interpret the type of parameters $A$ as a singleton. 
There is another way to relate $\Tt$ and $\Tt_A$,
called the \emph{parameter passing process}, which runs as follows.
By adding to $\Tt_A$ a constant  $a$ (called the \emph{parameter}) of type $A$ we get 
a \emph{theory with parameter}~$\Tt_a$, 
such that for each parameterized term $f'\colon A\times X\to Y$ in $\Tt_A$
there is a term $f'(a,-)\colon X\to Y$ in $\Tt_a$.
Then the \emph{parameter passing morphism} $j\colon \Tt \to \Tt_a$ 
maps each term $f\colon X\to Y$ in $\Tt$ to $f'(a,-)\colon X\to Y$ in $\Tt_a$.
Given a model $M_A$ of $\Tt_A$ 
an \emph{argument} $\alpha$ is an element of the set $M_A(A)$,
it provides a model $M_{A,\alpha}$ of $\Tt_a$ 
which extends $M_A$ and satisfies $M_{A,\alpha}(a)=\alpha$.
Thanks to the parameter passing morphism, 
the model $M_{A,\alpha}$ of $\Tt_a$ gives rise to a model $M$ of $\Tt$ 
such that $M(f)=M_A(f')(\alpha,-)$ for each term $f$ in $\Tt$.
Moreover, under some relevant terminality assumption on $M_A$, 
this correspondence between the arguments $\alpha\in M_A(A)$ 
and the models of $\Tt$ is a bijection: 
this is the \emph{exact parameterization property}.


The parameterization process and its associated parameter passing process
have been described for each given theory $\Tt$,
but in fact they have the property of preserving the theory structure, 
which can be stated precisely in a categorical framework:
this is the aim of this paper. 
The parameterization process is defined as a \emph{functor}:
the construction of the parameterized theory~$\Tt_A$ from the given theory~$\Tt$ is 
a functor, which in addition is left adjoint to the construction of a coKleisli category.
The parameter passing process is defined as a \emph{natural transformation}, 
along the following lines.  
First, the construction of the theory with parameter~$\Tt_a$
from the parameterized theory~$\Tt_A$ is simply a pushout construction,
such that the construction of~$\Tt_a$ from~$\Tt$ is a functor.
Then, each parameter passing morphism $j:\Tt\to\Tt_A$ is defined 
from a lax colimit of theories,
in such a way that the parameter passing morphisms are (essentially) the components 
of a natural transformation from the identity functor to this functor.


A first version of this approach can be found in \cite{DDLR05},
and a more abstract point of view, relying on \emph{diagrammatic logic},
is presented in \cite{eatDiaLog}. 
With respect to the previous papers like \cite{LPR03}, 
we provide a new interpretation of the parameterization process 
and in addition an interpretation of the parameter passing process.
Moreover, we take into account the fact that there is a pure part in the given theory,
and we derive the exact parameterization property
from a more general result which does not rely on the existence of a terminal model.


Equational theories are defined in section~\ref{sec:defi},
then the parameterization process and the parameter passing process
are studied in section~\ref{sec:const}. 
Various examples are presented.
Most of the categorical notions used in this paper can be found in \cite{MacLane98}
or in \cite{BW99}.
We omit the size issues: for instance most colimits should be small. 
A \emph{graph} is a directed multigraph,
and in order to distinguish between various kinds of structures 
with an underlying graph, 
we speak about the \emph{objects} and \emph{morphisms} of a category, 
the \emph{types} and \emph{terms} of a theory or a specification 
and the \emph{points} and \emph{arrows} of a limit sketch.

\section{Definitions} 
\label{sec:defi}

\subsection{Equational theories and specifications} 
\label{subsec:equa}

In this paper, equational logic is seen from a categorical point of view, 
as for instance in \cite{Pitts}. 

\begin{defi}
\label{defi:equa-thry}
The category $\catT_{\eq}$ of \emph{equational theories} 
is made of the categories with chosen finite products
together with the functors which preserve the chosen finite products.
In addition, $\catT_{\eq}$ can be seen as a 2-category 
with the natural transformations as 2-cells. 
\end{defi}

Equational theories are called simply \emph{theories}.
For instance, the theory $\Set$ is made of the category of sets 
with the cartesian products as chosen products.

\begin{rema}
\label{rema:equa-thry}
The correspondence between 
equational theories in the universal algebra style (as in \cite{LEW96}) 
and equational theories in the categorical style (as defined here)  
can be found in \cite{Pitts}. 
Basically, the \emph{sorts} and products of sorts become objects, still called types,
the \emph{operations} and \emph{terms} become morphisms, still called terms 
(the \emph{variables} correspond to projections, as in example~\ref{exam:sgp})
and the \emph{equations} become equalities:
for instance a commutative square $g_1\circ f_1=g_2\circ f_2$ 
means that there is a term $h$ such that $g_1\circ f_1=h$ and $g_2\circ f_2=h$.
A more subtle point of view on equations is presented in \cite{eatLong}.
\end{rema}

\begin{defi}
\label{defi:equa-mod}
A \emph{(strict) model} $M$ of a theory $\Tt$ is 
a morphism of theories $M\colon \Tt\to\Set$
and a \emph{morphism $m\colon M\to M'$ of models} of $\Tt$ is a natural transformation. 
This forms the category $\Mod(\Tt)$ of models of $\Tt$.
\end{defi}

For every morphism of equational theories $\tt\colon \Tt_1\to\Tt$,
we denote by $\tt^\md\colon \Mod(\Tt)\to\Mod(\Tt_1)$ 
the functor which maps each model $M$ of $\Tt$ 
to the model $\tt^\md(M)=M\circ \tt$ of $\Tt_1$
and each morphism $m\colon M\to M'$ to $m\circ \tt$. 
In addition, for each model $M_1$ of $\Tt_1$, 
the category of \emph{models of $\Tt$ extending $M_1$} is denoted $\Mod(\Tt)|_{M_1}$,
it is the subcategory of $\Mod(\Tt)$ 
made of the models $M$ such that $\tt^\md(M)=M_1$
and the morphisms $m$ such that $\tt^\md(m)=\id_{M_1}$.
Whenever $\tt$ is surjective on types, the category $\Mod(\Tt)|_{M_1}$ is discrete.

A theory $\Tt$ can be described by some presentation: 
a \emph{presentation} of an equational theory $\Tt$
is an equational specification $\Ss$ which generates $\Tt$;
this is denoted $\Tt\dashv\Ss$.
Two specifications are called \emph{equivalent} when they present the same theory.
An equational specification can be defined either 
in the universal algebra style as a signature (made of sorts and operations)
together with equational axioms, or equivalently, in a more categorical style,  
as a finite product sketch, see \cite{Lellahi89}, \cite{BW99}.
The correspondence between the universal algebra and the categorical  
points of view runs as in remark~\ref{rema:equa-thry}.

\begin{defi}
\label{defi:equa-spec}
The category $\catS_{\eq}$ of \emph{equational specifications} 
is the category of finite product sketches.
With (generalized) natural transformations as 2-cells, $\catS_{\eq}$ can be seen as a 2-category.
\end{defi}

Equational specifications are called simply \emph{specifications}. 
The category $\catT_{\eq}$ can be identified to a subcategory of $\catS_{\eq}$
(more precisely, to a reflective subcategory of $\catS_{\eq}$).
When $\Ss$ is a presentation of $\Tt$, a model of $\Tt$
is determined by its restriction to $\Ss$, which is called a \emph{model} of $\Ss$,
and in fact $\Mod(\Tt)$
can be identified to the category $\Mod(\Ss)$ of models of $\Ss$.

\begin{figure}[!t]
$$ \begin{array}{|l|l|c|}
\hline
 & \textrm{subscript} \; \ttE & \Ss_{\ttE} \\
\hline
\hline
  \textrm{type (or sort)} & \Type & X \\ 
\hline
  \textrm{term (or operation)} & \Term & 
  \xymatrix@C=3pc{X\ar[r]^{f} & Y} \\ 
\hline
  \textrm{selection of identity} & \Selid &  
  \xymatrix@C=3pc{X\ar[r]^{\id_X} & X} \\
\hline
  \textrm{composition} & \Comp &  
  \xymatrix@C=3pc{X\ar[r]^{f} \ar@/_3ex/[rr]_{g\circ f}^{=} & Y\ar[r]^{g} & Z}  \\ 
\hline
  \textrm{terminal type} & \zProd &   
  \xymatrix{ \uno \\ } \\ 
\hline
  \textrm{collapsing} & \zTuple &  
  \xymatrix@C=3pc{ 
    X \ar[r]^{\tu_X} &\uno  \\ } \\ 
\hline
  \textrm{binary product} & \bProd &   
  \xymatrix@C=3pc@R=.7pc{ X & \\ 
    & X\stimes Y \ar[lu]_{p_X} \ar[ld]^{p_Y} \\ Y & \\ } \\ 
\hline
  \textrm{pairing} & \bTuple &  
  \xymatrix@C=3pc@R=1pc{ & X & \\ 
    Z \ar[ru]^{f} \ar[rd]_{g} \ar[rr]|{\,\tuple{f,g}\,} & 
    \ar@{}[u]|{=}\ar@{}[d]|{=} & 
    X\stimes Y \ar[lu]_{p_X} \ar[ld]^{p_Y} \\ & Y & \\ } \\ 
\hline
\end{array}$$
\caption{\label{fig:elem} Elementary specifications}
\end{figure}

We will repeatedly use the fact that $\catT_{\eq}$ and $\catS_{\eq}$,
as well as other categories of theories and of specifications, have colimits,
and that left adjoint functors preserve colimits. 
In addition every specification is the colimit of a diagram of elementary specifications.
The \emph{elementary specifications} are the specifications respectively made of: 
a type, 
a term, 
an identity term, 
a composed term, 
a $n$-ary product 
and a $n$-ary tuple for all $n\geq0$,
or only for $n=0$ and $n=2$, as in figure~\ref{fig:elem}.
Let us consider a theory $\Tt$ presented by a specification $\Ss$, 
then $\Ss$ is the colimit of a diagram $\Delta$ of elementary specifications,
and $\Tt$ is the colimit of the diagram of theories generated by $\Delta$.

\subsection{Examples} 
\label{subsec:exam}

\begin{exam}
\label{exam:term}

Let us consider the theory $\Tt_{\oper,0}$ presented by two types $X,Y$,
and the three following theories extending $\Tt_{\oper,0}$ 
(the subscript $\oper$ stands for ``operation'',
since $\Tt_{\oper}$ is presented by the elementary specification 
for terms or operations $\Ss_{\Term}$). 
The unit type is denoted $\uno$ and the projections are not given any name.

$$  \begin{array}{c|c|c|c|}
\cline{2-2} 
  \Tt_{\oper,A} \dashv  & 
  \xymatrix@R=1pc{A & A\stimes X\ar[l]\ar[d]\ar[rd]^{f'} & \\ & X & Y \\ } &  
  \multicolumn{2}{c}{} \\
\cline{2-2} \cline{4-4} 
  \multicolumn{2}{c}{} & 
  \qquad \Tt_{\oper,a} \dashv  & 
  \xymatrix@R=1pc{A & A\stimes X\ar[l]\ar[d]\ar[rd]^{f'} & \\ \uno \ar[u]^{a} &X & Y \\ } \\
\cline{2-2} \cline{4-4} 
  \Tt_{\oper} \dashv  & 
  \xymatrix{ & X\ar[r]^{f} &Y \\ } & 
  \multicolumn{2}{c}{} \\
\cline{2-2} 
\end{array}$$

These theories are related by various morphisms (all of them preserving $\Tt_{\oper,0}$):
$\tt_{\oper,A}\colon \Tt_{\oper,A}\to\Tt_{\oper}$ maps $A$ to $\uno$
and $\tt_{\oper,a}\colon \Tt_{\oper,a}\to\Tt_{\oper}$ extends $\tt_{\oper,A}$ 
by mapping $a$ to $\id_{\uno}$,
while $j_{\oper,A}\colon \Tt_{\oper,A}\to\Tt_{\oper,a}$ is the inclusion.
In addition, here are two other presentations of the theory $\Tt_{\oper,a}$
(the projections are omitted and $ \uno\times X$ is identified to $X$):

$$ 
 \begin{array}{|c|}
\cline{1-1}
\xymatrix@R=1pc{A & A\stimes X\ar[rd]^{f'} & \\ 
\uno \ar[u]^{a} & X  \ar@{}[ru]|(.3){=} \ar[u]^{a\stimes\id_X} \ar[r]_{f''} & Y \\ } \\ 
\cline{1-1}
\end{array}  
\qquad\qquad 
\begin{array}{|c|}
\cline{1-1}
\xymatrix@R=1pc{A & A\stimes X\ar[r]^{f'} \ar@{}[rd]|{=} &Y \\ 
\uno \ar[u]^{a} & X \ar[u]^{a\stimes\id_X} \ar[r]_{f''} & Y \ar[u]_{\id_Y}  \\ } \\ 
\cline{1-1}
\end{array} $$
It is clear from these presentations of $\Tt_{\oper,a}$
that there is a morphism $j_{\oper}\colon \Tt_{\oper}\to\Tt_{\oper,a}$ 
which maps $f$ to $f''$.
In addition, $\tt_{\oper,a}\circ j_{\oper,A} = \tt_{\oper,A}$
and there is a natural transformation
$t_{\oper} \colon  j_{\oper} \circ \tt_{\oper,A}  \To j_{\oper,A}$ 
defined by $(t_{\oper})_X=\id_X$, $(t_{\oper})_Y=\id_Y$ 
and $(t_{\oper})_A=a\colon \uno\to A$.

$$ \xymatrix@R=.8pc{
  \Tt_{\oper,A} \ar[dd]_{\tt_{\oper,A}} \\ 
  \mbox{ } \\ 
  \Tt_{\oper}  \\
  }
\qquad \qquad 
\xymatrix@R=.8pc{
  \Tt_{\oper,A} \ar[dr]^{j_{\oper,A}} \ar[dd]_{\tt_{\oper,A}} \\ 
  \ar@{}[r]|(.4){=} & \Tt_{\oper,a} \ar[dl]^{\tt_{\oper,a}} \\ 
  \Tt_{\oper}  & \\
  }
\qquad \qquad 
 \xymatrix@R=.8pc{
  \Tt_{\oper,A} \ar[dr]^{j_{\oper,A}} \ar[dd]_{\tt_{\oper,A}} & \\ 
  \ar@{}[r]|(.4){\trnat}|(.25){t_{\oper}} & \Tt_{\oper,a}  \\ 
  \Tt_{\oper} \ar[ur]_{j_{\oper}} \\ 
  }
$$ 

\textbf{Parameterization process}
(construction of $\Tt_{\oper,A}$ from $\Tt_{\oper}$).
The theory $\Tt_{\oper,A}$ is obtained from $\Tt_{\oper}$
by adding a type $A$, called the \emph{type of parameters},
to the domain of the unique term in $\Tt_{\oper}$.
Then $\Tt_{\oper,A}$ can be seen as a \emph{generalization} of $\Tt_{\oper}$,
since each model $M$ of $\Tt_{\oper}$ can be identified to a model of $\Tt_{\oper,A}$
where $M(A)$ is a singleton. 

\textbf{Parameter passing process}  
(construction of $\Tt_{\oper,a}$ from $\Tt_{\oper,A}$
and of a morphism from $\Tt_{\oper}$ to $\Tt_{\oper,a}$). 
The theory $\Tt_{\oper,a}$ is obtained from $\Tt_{\oper,A}$
by adding a constant term $a\colon \uno\to A$, called the \emph{parameter}.
A model $M_a$ of $\Tt_{\oper,a}$ is made of a model $M_A$ of $\Tt_{\oper,A}$
together with an element $\alpha=M_a(a)\in M_A(A)$,
so that we can denote $M_a=(M_A,\alpha)$.
Now, let $M_A$ be some fixed model of $\Tt_{\oper,A}$,
then the models $M_a$ of $\Tt_{\oper,a}$ extending $M_A$ correspond bijectively 
to the elements of $M_A(A)$ by $M_a \mapsto M_a(a)$,
so that we get the \emph{parameter adding} bijection
(the category $\Mod(\Tt_{\oper,a})|_{M_A}$ is discrete):
  $$ \Mod(\Tt_{\oper,a})|_{M_A} \isoto M_A(A) \hsp \mbox{by} \hsp  
  M_a = (M_A,\alpha) \mapsto M_a(a)=\alpha \;. $$
On the other hand, 
each model $M_a=(M_A,\alpha)$ of $\Tt_{\oper,a}$ gives rise to a model
${j_{\oper}}^\md(M_a)$ of $\Tt_{\oper}$ such that 
${j_{\oper}}^\md(M_a)(X)=M_a(X)=M_A(X)$, ${j_{\oper}}^\md(M_a)(Y)=M_a(Y)=M_A(Y)$
and ${j_{\oper}}^\md(M_a)(f)=M_a(f'')=M_A(f')(\alpha,-) $.
Now, let $M_A$ be some fixed model of $\Tt_{\oper,A}$ 
and $M_0$ its restriction to $\Tt_{\oper,0}$,
then for each model $M_a=(M_A,\alpha)$ of $\Tt_{\oper,a}$ extending $M_A$ 
the model ${j_{\oper}}^\md(M_a)$ of $\Tt_{\oper}$ extends $M_0$.
This yields the \emph{parameter passing} function 
(the categories $\Mod(\Tt_{\oper,a})|_{M_A}$ and $\Mod(\Tt_{\oper})|_{M_0}$ 
are discrete):
  $$ \Mod(\Tt_{\oper,a})|_{M_A} \to \Mod(\Tt_{\oper})|_{M_0}  \hsp \mbox{by} \hsp  
  M_a  \mapsto {j_{\oper}}^\md(M_a) \;. $$

\textbf{Exact parameterization.} 
Let $M_0$ be any fixed model of $\Tt_{\oper,0}$,
it is made of two sets $\setX=M_0(X)$ and $\setY=M_0(Y)$. 
Let $M_A$ be the model of $\Tt_{\oper,A}$ extending $M_0$ such that 
$M_A(A)=\setY^{\setX}$ and 
$M_A(f')\colon \setY^{\setX} \times \setX \to \setY$ 
is the application. 
It can be noted that $M_A$ is the terminal model of $\Tt_{\oper,A}$ extending $M_0$. 
Then the parameter passing function is a bijection, 
and composing it with the parameter adding bijection we get
(where $\curry{M(f)}\in\setY^{\setX}$ corresponds by currying to 
$M(f)\colon \setX\to\setY$):
  $$ \Mod(\Tt_{\oper})|_{M_0} \iso M_A(A) \hsp \mbox{by} \hsp  
  M \leftrightarrow \curry{M(f)} \;. $$ 

\end{exam}

\begin{exam}
\label{exam:sgp}
Let $\Tt_{\sgp}$ be the theory for semigroups 
presented by one type $G$, 
one term $\prd\colon G^2 \to G$ 
and one equation $\prd(x,\prd(y,z))=\prd(\prd(x,y),z)$
where $x$, $y$, $z$ are variables of type $G$.
As usual with the categorical point of view, 
in fact the \emph{variables} are projections; 
here, $x,y,z\colon G^3\to G$ are the three projections 
and $\prd(x,y)$ is $\prd\circ\tuple{x,y}\colon G^3\to G$,
composed of the pair $\tuple{x,y}\colon G^3\to G^2$ and of $\prd\colon G^2 \to G$,
and so on.

\textbf{Parameterization process}. 
In order to get parameterized families of semigroups, 
we consider the theory $\Tt_{\sgp,A}$ 
presented by two types $A$ and $G$, 
one term $\prd'\colon A\times G^2 \to G$ 
and one equation $\prd'(p,x,\prd'(p,y,z))= \prd'(p,\prd'(p,x,y),z)$ 
where $x$, $y$, $z$ are variables of sort $G$ 
and $p$ is a variable of sort $A$. 

\textbf{Parameter passing process}.
The theory $\Tt_{\sgp,a}$ is $\Tt_{\sgp,A}$ together with a parameter $a\colon \uno\to A$,
hence with $\prd''=\prd'\circ(a\times\id_{G^2}) \colon G^2 \to G$ 
(where $\uno \times G^2$ is identified to $G^2$).
Each model $M_A$ of $\Tt_{\sgp,A}$ gives rise to a family of models of $\Tt_{\sgp,a}$,
all of them with the same underlying set $M_A(G)$
but with different interpretations of $a$ in $M_A(A)$.
Mapping $\prd$ to $\prd''$ defines a morphism from $\Tt_{\sgp}$ to $\Tt_{\sgp,a}$.
So, each model $M_a$ of $\Tt_{\sgp,a}$ gives rise to a model $M$ of $\Tt_{\sgp}$ such that 
$M(G)=M_a(G)$ and $M(\prd)(x,y)=M_a(\prd')(\alpha,x,y) $ for each $x,y\in M_a(G)$,
where $\alpha=M_a(a)$ is called the argument.

\end{exam}

\begin{exam}
\label{exam:list}
This example motivates the existence of pure terms in the given theory.
Let us consider the theory $\Tt_{\nat}$ ``of naturals''
presented by a type $N$ and two terms $z\colon \uno\to N$ and $s\colon N\to N$,
and let us say that $z$ is pure. 
Let $\Tt_{\nat,0}$ be the subtheory presented by $N$ and $z$,
it is called the pure subtheory of $\Tt_{\nat}$. 
We define the theory $\Tt_{\nat,A}$ as made of 
two types $A$ and $N$ and two terms 
$z\colon \uno\to N$ and $s'\colon A\times N\to N$.
It should be noted that $\Tt_{\nat,A}$ contains $\eps_\uno\colon A\times \uno\to \uno$ 
and $z'=z\circ \eps_\uno\colon A\times \uno\to N$.
Then $\Tt_{\nat,A}$ is a theory ``of lists of $A$'', with $z$ for the empty list 
and $s'$ for concatenating an element to a list.
In this way, the theory of lists of $A$ is built as a generalization 
of the theory of naturals; 
indeed the naturals can be identified to the lists over a singleton. 
\end{exam}

\begin{exam}
\label{exam:dm}
Here is another example where pure terms are required,
this is a simplified version of many structures in Kenzo/EAT.
Let $\Tt_{\mon}$ be the theory for monoids presented by
one type $G$,
two terms $\prd\colon G^2 \to G$ and $\unt\colon \to G$,
and the equations $\prd(x,\prd(y,z)) = \prd(\prd(x,y),z)$,
$\prd(x,\unt) = x$, $\prd(\unt,x) = x$ where $x$, $y$, $z$ are variables of type $G$.
Let $\Tt_{\dm}$ be the theory for \emph{differential monoids},
presented by $\Tt_{\mon}$ together with one term $\dif \colon G \to G$ 
and the equations 
$\dif(\prd(x,y)) = \prd(\dif(x),\dif(y))$, $\dif(\unt) = \unt$, $\dif(\dif(x)) = \unt$,
and with the terms in $\Tt_{\mon}$ as its pure terms. 
In order to get parameterized families of differential structures on one monoid,
we define the theory $\Tt_{\dm,A}$ 
presented by two types $G$ and $A$, three terms
$\prd\colon G^2 \to G$, $\unt\colon \uno\to G$ and $\dif' \colon A\times G \to G$,
the three equations from $\Tt_{\mon}$ and the equations 
$\dif'(p,(\prd(x,y))) = \prd(\dif'(p,x), \dif'(p,y))$, $\dif'(p, \unt) = \unt$,
$\dif'(p,\dif'(p,x)) = \unt$.
Each model $M_A$ of $\Tt_{\dm,A}$ gives rise to a family
of models of $\Tt_{\dm}$, 
all of them with the same underlying monoid $(M_A(G),M_A(\prd),M_A(\unt))$:
there is a model $M_a$ of $\Tt_{\dm}$ extending $M_A$ 
for each element $\alpha$ in $M_A(A)$,
with its differential structure defined by $M_a(\dif)= M_A(\dif')(\alpha,-)$.
\end{exam}

\begin{exam}
\label{exam:pi}
In the next sections we will use the theories with the following presentations:
$$  \begin{array}{c|c|c|c|}
\cline{2-2} 
\Pi_A \dashv  & \xymatrix{A} &  \multicolumn{2}{c}{} \\
\cline{2-2} \cline{4-4}
\multicolumn{2}{c}{} & \qquad \Pi_a \dashv  & \xymatrix@R=1pc{ A \\ \uno \ar[u]^{a} \\ } \\
\cline{2-2} \cline{4-4}
\Pi \dashv  & \xymatrix{\uno} &  \multicolumn{2}{c}{} \\
\cline{2-2} 
\end{array}$$
These theories are related by several morphisms:
$\pi_A\colon  \Pi_A\to \Pi$ maps $A$ to $\uno$, 
both $i\colon \Pi\to \Pi_a$ and $i_A\colon  \Pi_A\to \Pi_a$ are the inclusions,
and $\pi_a\colon  \Pi_a\to \Pi$ extends $\pi_A$ by mapping $a$ to $\id_{\uno}$,
so that $\pi_A$ and $\pi_a$ are epimorphisms.
In addition, $\pi_a\circ i_A=\pi_A$ and there is a natural transformation
$p\colon  i \circ \pi_A  \To i_A$ defined by $p_A=a\colon \uno\to A$. 
The diagram below on the right is the \emph{lax colimit of $\pi_A$}, 
which means that it enjoys the following universal property:
for each $\Pi'_a$ with $i'_A\colon  \Pi_A\to \Pi'_a$, $i'\colon  \Pi\to \Pi'_a$ 
and $p'\colon i' \circ \pi_A  \To i'_A$,
there is a unique $h\colon  \Pi_a\to \Pi'_a$ such that 
$h\circ i_A = i'_A$, $h\circ i = i'$ and $h\circ p=p'$. 
For instance, given $\Pi$, $\pi_A\colon  \Pi_A\to \Pi$, $\id_{\Pi}\colon  \Pi\to \Pi$ 
and $\id_{\pi_A}\colon \pi_A \To \pi_A$,
then $\pi_a\colon  \Pi_a\to \Pi$ is the unique morphism such that 
$\pi_a\circ i_A = \pi_A$, $\pi_a\circ i = \id_{\Pi}$ and $\pi_a\circ p=\id_{\pi_A}$.
$$ \xymatrix@R=.8pc{
  \Pi_A \ar[dd]_{\pi_A} \\ 
  \\ 
  \Pi  \\
  }
\qquad \qquad 
\xymatrix@R=.8pc{
  \Pi_A \ar[dr]^{i_A} \ar[dd]_{\pi_A} \\ 
  \ar@{}[r]|(.4){=} & \Pi_a \ar[dl]^{\pi_a} \\ 
  \Pi  & \\
  }
\qquad \qquad 
 \xymatrix@R=.8pc{
  \Pi_A \ar[dr]^{i_A} \ar[dd]_{\pi_A} & \\ 
  \ar@{}[r]|(.4){\trnat}|(.25){p} & \Pi_a  \\ 
  \Pi \ar[ur]_{i} \\ 
  }
$$ 
\end{exam}

\subsection{Some other kinds of theories} 
\label{subsec:thry}

For every theory $\Tt$, 
the coslice category of \emph{theories under $\Tt$} is denoted $\Tt\cosl\catT_{\eq}$.
It can be seen as a 2-category, with 
the natural transformations which extend the identity on $\Tt$ as 2-cells.

\begin{defi}
\label{defi:A-th}
A \emph{parameterized theory} $\Tt_A$ 
is a theory $\Tt$ with a distinguished type, 
called the \emph{type of parameters} and usually denoted $A$. 
The 2-category of parameterized theories 
is the coslice 2-category $\catT_A=\Pi_A\cosl\catT_{\eq}$ of theories under $\Pi_A$.
A \emph{theory with a parameter} $\Tt_a$ 
is a parameterized theory with a distinguished constant of type $A$,
called the \emph{parameter} and usually denoted $a\colon \uno\to A$.
The 2-category of theories with a parameter 
is the coslice 2-category $\catT_a=\Pi_a\cosl\catT_{\eq}$ of theories under $\Pi_a$.
\end{defi}

According to the context,
$\Tt_A$ denotes either the parameterized theory $\gamma_A \colon \Pi_A\to\Tt_A$,
or the equational theory $\Tt_A$ itself. 
Similarly for $\Tt_a$, which denotes either $\gamma_a\colon \Pi_a\to\Tt_a$
or $\Tt_a$ itself. 
In addition, it can be noted that $\Pi$ is the initial theory
(which may also be presented by the empty specification) 
so that $\Pi\cosl\catT_{\eq}$ is isomorphic to $\catT_{\eq}$.
The 2-categories $\catS_A$ and $\catS_a$ of \emph{parameterized specifications}
and \emph{specifications with a parameter}, respectively, 
are defined in a similar way.

On the other hand, the input of the parameterization process 
is a theory $\Tt$ together with a wide subtheory $\Tt_0$
(\emph{wide} means: with the same types), 
such a structure is called a decorated theory.

\begin{defi}
\label{defi:dec-th}
A \emph{decorated theory} is made of a theory $\Tt$ with a wide subtheory $\Tt_0$ 
called the \emph{pure} subtheory of $\Tt$.
A morphism of decorated theories is a morphism of theories $\tt\colon \Tt\to\Tt'$ 
which maps the pure part of $\Tt$ to the pure part of $\Tt'$.
This forms the category $\catT_{\dec}$ of decorated theories.  
\end{defi}

So, a decorated theory $\Tt$ is endowed with a distinguished family of terms,
called the \emph{pure} terms, 
such that all the identities and projections are pure  
and every composition or tuple of pure terms is pure.
Pure terms are denoted with ``$\rpto$''.
When there is no ambiguity we often use the same notation $\Tt$ 
for the theory $\Tt$ itself  
and for the decorated theory made of $\Tt$ and $\Tt_0$.
The decorated specifications are defined in a straightforward way.
For instance, we may consider the decorated specification made of 
a type $N$, a pure term $z\colon \uno\rpto N$ and a term $s\colon N\to N$
(see example~\ref{exam:list}).

\section{Constructions} 
\label{sec:const}

\subsection{The parameterization process is a functor} 
\label{subsec:gene}

In this section we prove that the parameterization process is functorial,
by defining a functor $F_{\param}\colon \catT_{\dec}\to\catT_A$,
called the \emph{parameterization functor}, 
which adds the type of parameters to the domain of every non-pure term.
In addition, theorem~\ref{theo:gene} states that $F_{\param}$ is left adjoint to 
the functor $G_{\param}\colon \catT_A\to\catT_{\dec}$, 
which builds the coKleisli category of the comonad $A\times-$.
 

\begin{figure}[!t]
$$ \begin{array}{|l|l|c|c|}
\hline
 & \textrm{index} \; \ttE\deco & \Ss_{\ttE\deco} & F_{\param}\Ss_{\ttE\deco} \\
\hline
\hline
  \textrm{type} & \Type\dotp & X & X \\ 
\hline
  \textrm{pure term} & \Term\dotp & 
  \xymatrix@C=3pc{X\ar@{~>}[r]^{f} & Y} & 
  \xymatrix@C=3pc{X\ar[r]^{f} & Y} \\ 
\hline
  \textrm{term} & \Term\dotg &
  \xymatrix@C=3pc{X\ar[r]^{f} & Y} & 
  \xymatrix@C=3pc{A\stimes X\ar[r]^{f'} & Y}  \\ 
\hline
  \textrm{selection of identity} & \Selid\dotp &
  \xymatrix@C=3pc{X\ar@{~>}[r]^{\id_X} & X} & 
  \xymatrix@C=3pc{X\ar[r]^{\id_X} & X} \\
\hline
  \textrm{pure composition} & \Comp\dotp & 
  \xymatrix@C=3pc{X\ar@{~>}[r]^{f} \ar@/_3ex/@{~>}[rr]_{g\circ f}^{=} & Y\ar@{~>}[r]^{g} & Z} & 
  \xymatrix@C=3pc{X\ar[r]^{f} \ar@/_3ex/[rr]_{g\circ f}^{=} & Y\ar[r]^{g} & Z} \\
\hline
  \textrm{composition} & \Comp\dotg &  
  \xymatrix@C=3pc{X\ar[r]^{f} \ar@/_3ex/[rr]_{g\circ f}^{=} & Y\ar[r]^{g} & Z} & 
  \xymatrix@C=3pc{A\stimes X \ar[r]^{\tuple{\proj_X,f'}} 
     \ar@/_3ex/[rr]_{g'\circ\tuple{\proj_X,f'}}^{=} & A\stimes Y\ar[r]^{g'} & Z \\} \\ 
\hline
  \textrm{terminal type} & \zProd\dotp &  
  \xymatrix{ \uno \\ }  & 
  \xymatrix{ \uno \\ } \\ 
\hline
  \textrm{pure collapsing} & \zTuple\dotp & 
  \xymatrix@C=3pc{ 
    X \ar@{~>}[r]^{\tu_X} & \uno  \\ }  & 
  \xymatrix@C=3pc{ 
    X \ar[r]^{\tu_X} & \uno  \\ }  \\  
\hline
  \textrm{binary product} & \bProd\dotp &  
  \xymatrix@C=3pc@R=.2pc{ X & \\ 
    & X\stimes Y \ar@{~>}[lu]_{p_X} \ar@{~>}[ld]^{p_Y} \\ Y & \\ }  & 
  \xymatrix@C=3pc@R=.2pc{ X & \\ 
    & X\stimes Y \ar[lu]_{p_X} \ar[ld]^{p_Y} \\ Y & \\ } \\ 
\hline
  \textrm{pure pairing} & \bTuple\dotp & 
  \xymatrix@C=3pc@R=.5pc{ & X & \\ 
    Z \ar@{~>}[ru]^{f} \ar@{~>}[rd]_{g} \ar@{~>}[rr]|{\tuple{f,g}} & \ar@{}[u]|{=}\ar@{}[d]|{=} & 
    X\stimes Y \ar@{~>}[lu]_{p_X} \ar@{~>}[ld]^{p_Y} \\ & Y & \\ }  & 
  \xymatrix@C=3pc@R=.5pc{ & X & \\ 
    Z \ar[ru]^{f} \ar[rd]_{g} \ar[rr]|{\tuple{f,g}} & 
    \ar@{}[u]|{=}\ar@{}[d]|{=} & 
    X\stimes Y \ar[lu]_{p_X} \ar[ld]^{p_Y} \\ & Y & \\ } \\ 
\hline
  \textrm{pairing} & \bTuple\dotg & 
  \xymatrix@C=3pc@R=.5pc{ & X & \\ 
    Z \ar[ru]^{f} \ar[rd]_{g} \ar[rr]|{\tuple{f,g}} & \ar@{}[u]|{=}\ar@{}[d]|{=} & 
    X\stimes Y \ar@{~>}[lu]_{p_X} \ar@{~>}[ld]^{p_Y} \\ & Y & \\ }  & 
  \xymatrix@C=3pc@R=.5pc{ & X & \\ 
    A\stimes Z \ar[ru]^{f'} \ar[rd]_{g'} \ar[rr]|{\tuple{f',g'}} & 
    \ar@{}[u]|{=}\ar@{}[d]|{=} & 
    X\stimes Y \ar[lu]_{p_X} \ar[ld]^{p_Y} \\ & Y & \\ } \\ 
\hline
\end{array}$$
\caption{\label{fig:F-elem} The functor $F_{\param}$ 
on elementary decorated specifications}
\end{figure}

In order to define the functor $F_{\param}$ 
we use the fact that it should preserve colimits.
It has been seen in section~\ref{subsec:equa} that 
every specification is the colimit of a diagram of elementary specifications.
Similarly, every decorated specification is the colimit of 
a diagram of elementary decorated specifications, 
denoted $\Ss_{\ttE\deco}$ where 
$\mathtt{x}=\mathtt{p}$ for ``pure'' or $\mathtt{x}=\mathtt{g}$ for ``general''. 
Informally, the functor $F_{\param}$ explicits the fact that every 
general feature  in a decorated specification gets parameterized, 
while every pure feature remains unparameterized.
Figure~\ref{fig:F-elem} defines the parameterized specification 
$F_{\param}(\Ss_{\ttE\deco})$ 
for each elementary decorated specification $\Ss_{\ttE\deco}$
(many projection arrows are omitted, 
when needed the projections from $A\times X$ are denoted 
$\proj_X\colon A\times X\to A$ and $\eps_X\colon A\times X\to X$).
The morphisms of parameterized specifications $F_{\param}(\ss)$,
for $\ss$ between elementary decorated specifications, 
are straightforward. 
For instance, let $\ss_{\ttc}\colon \Ss_{\Term\dotg}\to\Ss_{\Term\dotp}$
be the conversion morphism,
which corresponds to the fact that every pure term can be seen as a general term,
then $F_{\param}(\ss_{\ttc})$ maps 
$f'\colon A\times X\to Y$ in $F_{\param}(\Ss_{\Term\dotg})$ 
to $f\circ \eps_X\colon A\times X\to Y$ in $F_{\param}(\Ss_{\Term\dotp})$.
Now, given a decorated theory $\Tt$ presented by 
the colimit of a diagram $\Delta$ of elementary decorated specifications,
we define $F_{\param}(\Tt)$ as the parameterized theory presented by 
the colimit of the diagram $F_{\param}(\Delta)$ of parameterized specifications.

\begin{defi}
\label{defi:gene}
The functor $F_{\param}:\catT_{\dec}\to\catT_A$ defined above 
is called the \emph{parameterization functor}.
\end{defi}

Clearly the parameterization functor preserves colimits. 
In addition, let $\Tt_A$ be the parameterized theory $F_{\param}(\Tt)$,
it follows from the definition of $F_{\param}$ that the equational theory $\Tt_A$
is a theory under $\Tt_0$. 


Now the functor $G_{\param}$ is defined independently from $F_{\param}$.
Let $\Tt_A$ be a parameterized theory.
The endofunctor of product with $A$ forms a comonad on $\Tt_A$
with the counit $\eps$ made of the projections $\eps_X\colon A\times X\to X$
and the comultiplication made of the terms 
$\delta_X\colon A\times X\to A\times A\times X$ induced by the diagonal on $A$.
Let $\Tt$ be the coKleisli category of this comonad:
it has the same types as $\Tt_A$
and a term $\kl{f}\colon X\to Y$ for each term $f\colon A\times X\to Y$ in $\Tt_A$. 
There is a functor from $\Tt_A$ to $\Tt$ which is the identity on types 
and maps every $g\colon X\to Y$ in $\Tt_A$ 
to $\kl{g\circ \eps_X}\colon X\to Y$ in $\Tt$. 
Then every finite product in $\Tt_A$ is mapped to a finite product in $\Tt$,
which makes $\Tt$ a theory. 
Let $\Tt_0$ denote the image of $\Tt_A$ in $\Tt$, it is a wide subtheory of $\Tt$.
In this way, any parameterized theory yields a decorated theory.
The definition of $G_{\param}$ on morphisms is straightforward,
and the next result follows easily.

\begin{theo}
\label{theo:gene}
The parameterization functor $F_{\param}$ and the functor $G_{\param}$ 
form an adjunction $F_{\param}\dashv G_{\param}$:
  $$ \xymatrix@C=4pc{ \catT_{\dec} \ar@/^/[r]^{F_{\param}} \ar@{}[r]|{\bot} & 
  \catT_A \ar@/^/[l]^{G_{\param}} } $$
\end{theo}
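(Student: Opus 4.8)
The plan is to prove the adjunction by exhibiting its unit and verifying the universal property of each component, reducing everything to the elementary decorated specifications of figure~\ref{fig:F-elem} by means of the fact that $F_\param$ preserves colimits.

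First I would construct the unit $\eta\colon \id_{\catT_\dec}\To G_\param F_\param$. Since both $F_\param$ and $G_\param$ act as the identity on types, $G_\param F_\param\Tt$ shares the types of $\Tt$ (together with the new type of parameters $A$), and I set $\eta_\Tt$ to be the inclusion on types. On terms, a general term $f\colon X\to Y$ of $\Tt$ is sent by $F_\param$ to $f'\colon A\stimes X\to Y$ and then into the coKleisli category as $\kl{f'}\colon X\to Y$, while a pure term $f\colon X\rpto Y$ is sent to the pure term $\kl{f\circ\eps_X}\colon X\rpto Y$. I would then check that $\eta_\Tt$ preserves identities, composition, finite products and purity: this is exactly the content of figure~\ref{fig:F-elem}, the key point being that the parameterized composite $g'\circ\tuple{\proj_X,f'}$ is, by definition, the coKleisli composite of $\kl{f'}$ and $\kl{g'}$ for the comonad $A\stimes-$. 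Naturality of $\eta$ in $\Tt$ is then immediate from the definitions of $F_\param$ and $G_\param$ on morphisms.

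Next I would establish the universal property: given a parameterized theory $\Tt'_A$ and a morphism of decorated theories $\phi\colon \Tt\to G_\param\Tt'_A$, there is a unique morphism $\psi\colon F_\param\Tt\to\Tt'_A$ in $\catT_A$ with $G_\param\psi\circ\eta_\Tt=\phi$. Since $\Tt$ is the colimit of a diagram of elementary decorated theories and $F_\param$ preserves colimits, both $\catT_A(F_\param\Tt,\Tt'_A)$ and $\catT_\dec(\Tt,G_\param\Tt'_A)$ are limits over that diagram of the corresponding hom-sets; it therefore suffices to produce the bijection, naturally and compatibly with the diagram maps, for each elementary decorated specification of figure~\ref{fig:F-elem}. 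For these the correspondence is read off directly: $\psi$ is forced on types by $\phi$ and by $\psi(A)=A$, and on a parameterized generator $f'$ by the coKleisli bijection, since $\phi(f)=\kl{g}$ determines a unique $g\colon A\stimes\phi(X)\to\phi(Y)$ in $\Tt'_A$ and the triangle identity forces $\psi(f')=g$.

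The main obstacle is the well-definedness of $\psi$ as a theory morphism, that is, that it respects every equation of $F_\param\Tt$, together with the correct matching of the pure part. The equational checks localise, by colimit-preservation, to the composition and pairing rows of figure~\ref{fig:F-elem}, where the required identities are precisely the coKleisli laws of the comonad $A\stimes-$ and so hold in $\Tt'_A$ by the very definition of $G_\param$. The delicate point is the pure rows: there one must verify that the assignment $h\mapsto\kl{h\circ\eps}$, sending a term of $\Tt'_A$ to a pure term of $G_\param\Tt'_A$, sets up a genuine bijection between the pure generators on the two sides, so that the value of $\psi$ on an unparameterized (pure) generator is indeed uniquely determined by $\phi$. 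Once this is settled, the two assignments are mutually inverse and natural, which gives $F_\param\dashv G_\param$.
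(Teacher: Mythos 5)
Your overall architecture --- exhibit the unit, then verify the universal property by decomposing $\Tt$ into elementary decorated specifications and invoking preservation of colimits by $F_{\param}$ --- is the natural way to flesh out this theorem; note that the paper itself offers no proof of theorem~\ref{theo:gene} beyond the remark that, once $G_{\param}$ is defined on morphisms, ``the next result follows easily'', so there is no written argument to compare against, and your plan is presumably what the authors intend. Your treatment of the general rows is correct: coKleisli hom-sets are \emph{by definition} in bijection with the sets of terms $A\stimes X\to Y$, and the composition row of figure~\ref{fig:F-elem} is exactly coKleisli composition, so $\psi$ is uniquely forced on every parameterized generator.

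The genuine gap is the point you flag and then defer: the pure rows. You need the map $h\mapsto\kl{h\circ\eps_X}$ from $\Tt'_A(X,Y)$ to the pure terms of $G_{\param}\Tt'_A$ to be a bijection, but in the paper's construction of $G_{\param}$ the pure subtheory is the \emph{image} of $\Tt'_A$ in the coKleisli category, so this map is surjective by fiat and nothing whatsoever makes it injective: $\eps_X\colon A\stimes X\to X$ need not be an epimorphism in an arbitrary object of $\catT_A$. Concretely, let $\Tt'_A$ be presented by types $A,X,Y$, two terms $h_1,h_2\colon X\to Y$ and the single equation $h_1\circ\eps_X=h_2\circ\eps_X$; this does not force $h_1=h_2$ (interpret $A$ as the empty set, $X$ as a singleton, $Y$ as a two-element set). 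Take $\Tt$ to be the decorated theory generated by one pure term $f\colon X\rpto Y$, so that $F_{\param}\Tt$ is generated under $\Pi_A$ by an unparameterized term $f\colon X\to Y$. The two morphisms $\psi_i\colon F_{\param}\Tt\to\Tt'_A$ with $\psi_i(f)=h_i$ are distinct, yet $G_{\param}\psi_1\circ\eta_\Tt=G_{\param}\psi_2\circ\eta_\Tt$ because $\kl{h_1\circ\eps_X}=\kl{h_2\circ\eps_X}$; so the uniqueness half of your universal property fails on pure generators. Nor can a cleverer unit repair this: any candidate $\eta_\Tt(f)$ is pure, hence of the form $\kl{t\circ\eps}$ for some term $t$ of $F_{\param}\Tt$, and the relation ``$u_1\circ\eps=u_2\circ\eps$'' is a congruence on $\Tt'_A$ (stable under composition and tupling, using $u=\eps_Y\circ\tuple{\proj_X,u}$), so $\psi_1(t)\circ\eps=\psi_2(t)\circ\eps$ always.

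A related difficulty affects even the existence direction: when $\Tt$ has equations between pure terms, defining $\psi$ requires choosing preimages under $\kl{-\circ\eps}$ compatibly with pure composition and tupling, i.e., a functorial section of a quotient, which need not exist. So ``settling'' your delicate point is not a routine verification --- as stated it is false for general $\Tt'_A$, and the argument cannot be completed as planned. To make it go through one must either impose a hypothesis guaranteeing injectivity (e.g., that the counit components $\eps_X$ are epimorphisms), or work with a finer notion in which the pure part of $G_{\param}\Tt'_A$ records the morphism from $\Tt'_A$ itself rather than merely its image. You were right to isolate this as the crux; a complete proof must confront it rather than postpone it.
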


The next result states that $\Tt$ can be easily recovered from $\Tt_A$, 
by mapping $A$ to $\uno$.

\begin{prop}
\label{prop:general}
Let $\Tt$ be a decorated theory with pure subtheory $\Tt_0$
and $\gamma_A\colon \Pi_A\to\Tt_A$ the parameterized theory $F_{\param}(\Tt)$.
Let $\gamma\colon \Pi\to\Tt$ be the unique morphism from the initial theory $\Pi$ 
to the theory $\Tt$.
Then there is a morphism $\tt_A\colon \Tt_A\to\Tt$ under $\Tt_0$
such that the following square is a pushout:
$$ \xymatrix{ 
  \ar@{}[rd]|{\PO} 
  \Pi_A \ar[d]_{\pi_A} \ar[r]^{\gamma_A} & \Tt_A \ar[d]^{\tt_A} \\ 
  \Pi \ar[r]^{\gamma} &\Tt \\  
  } $$
\end{prop}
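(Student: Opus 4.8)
The plan is to construct $\tt_A$ explicitly on a presentation and then to derive the pushout property by reducing to elementary decorated specifications, using that both $F_{\param}$ and cobase change preserve colimits.

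First I would define $\tt_A\colon\Tt_A\to\Tt$. Since $\Tt_A=F_{\param}(\Tt)$ is built from a presentation of $\Tt$ by adjoining the type $A$ and replacing each non-pure term $f\colon X\to Y$ by its parameterized version $f'\colon A\stimes X\to Y$ while leaving pure terms untouched (Figure~\ref{fig:F-elem}), I set $\tt_A$ to be the identity on the types inherited from $\Tt$, to send $A$ to $\uno$, and to send each $f'\colon A\stimes X\to Y$ back to $f\colon X\to Y$ under the identification $\uno\stimes X\iso X$, keeping pure terms fixed. A row-by-row inspection of Figure~\ref{fig:F-elem} shows that the generating equalities (identities, compositions, collapsings, products and tuples) are preserved---this is exactly why the right-hand column of the figure is the image of the original data under $A\stimes-$---so $\tt_A$ is a well defined morphism of theories. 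By construction it fixes $\Tt_0$, hence is a morphism under $\Tt_0$, and $\tt_A\circ\gamma_A=\gamma\circ\pi_A$ since both send $A$ to $\uno$, so the square commutes.

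For the pushout property I would argue by colimit preservation. As $\Pi$ is the initial theory, $\Pi\cosl\catT_{\eq}$ is identified with $\catT_{\eq}$, and cobase change along $\pi_A$ gives a functor $P\colon\catT_A\to\catT_{\eq}$ with $P(\Tt_A)=\Pi+_{\Pi_A}\Tt_A$; being a cobase change it is a left adjoint and preserves colimits. The parameterization functor $F_{\param}$ preserves colimits by construction, and the underlying-theory functor $U\colon\catT_{\dec}\to\catT_{\eq}$ does so too, since a decorated theory presented by a colimit of a diagram of elementary decorated specifications has for underlying theory the colimit of the corresponding diagram of elementary theories. Thus $P\circ F_{\param}$ and $U$ are colimit-preserving functors on $\catT_{\dec}$, so to obtain a natural isomorphism $P\circ F_{\param}\iso U$ it suffices to produce one on elementary decorated specifications, compatibly with the conversion morphisms (such as $\ss_{\ttc}$) appearing in the presenting diagrams.

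This last verification is again done row by row on Figure~\ref{fig:F-elem}: applying $P$ forces $A=\uno$, so $A\stimes X$ collapses to $X$, the projection $\eps_X$ becomes $\id_X$, $\proj_X$ becomes $\tu_X$, and the inserted pair $\tuple{\proj_X,f'}$ collapses to $f$; in every case the result coincides with the underlying elementary theory $U(\Ss_{\ttE\deco})$. Evaluating the resulting isomorphism $P\circ F_{\param}\iso U$ at $\Tt$ identifies the pushout $\Pi+_{\Pi_A}\Tt_A$ with $\Tt$ and carries the pushout cocone to the square of the statement, its right-hand leg being $\tt_A$; hence the square is a pushout. The main obstacle I anticipate is precisely this colimit bookkeeping: one must confirm that $U$ preserves the colimits presenting decorated theories (so that decorations are generated correctly), and that the elementary isomorphisms are natural with respect to every conversion and structural morphism of the diagram---most delicately in the composition clause $\Comp\dotg$, where $F_{\param}$ introduces $\tuple{\proj_X,f'}$ and one must check that it collapses coherently once $A=\uno$.
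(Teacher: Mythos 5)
Your proposal is correct and takes essentially the same route as the paper: the paper's own (very terse) proof likewise checks the pushout property on each elementary decorated specification and then concludes by commuting two colimits---the colimit presenting the theory from its elementary components and the pushout---which is precisely your argument that the colimit-preserving functors $P\circ F_{\param}$ and $U$ agree naturally on elementary decorated specifications. Your write-up simply makes explicit what the paper leaves to the reader: the concrete definition of $\tt_A$, the identification of cobase change along $\pi_A$ as a colimit-preserving left adjoint, and the row-by-row naturality verification on figure~\ref{fig:F-elem}.
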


\begin{proof}
It can easily be checked that this property is satisfied by each elementary specification.
Then the result follows by commuting two colimits:
on the one hand the colimit that defines the given theory from its elementary components,
and on the other hand the pushout.
\end{proof}

When there is an epimorphism of theories $\tt\colon \Tt_1\to\Tt_2$,
we say that $\Tt_1$ is \emph{the generalization of $\Tt_2$ along $\tt$}.
Indeed, since $\tt$ is an epimorphism, 
the functor $\tt^\md\colon \Mod(\Tt_2)\to\Mod(\Tt_1)$ is a monomorphism,
which can be used for identifying $\Mod(\Tt_2)$ to a subcategory of $\Mod(\Tt_1)$.

\begin{coro}
\label{coro:general}
With notations as in proposition~\ref{prop:general},
$\Tt_A$ is the generalization of $\Tt$ along $\tt_A$.
\end{coro}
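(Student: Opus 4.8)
The plan is to unwind the definition of ``generalization along $\tt_A$'' given immediately before the statement: to say that $\Tt_A$ is the generalization of $\Tt$ along $\tt_A$ is, by that definition, precisely to say that $\tt_A\colon \Tt_A\to\Tt$ is an epimorphism of theories. So the whole corollary reduces to the single claim that $\tt_A$ is an epimorphism in $\catT_{\eq}$; once this is established there is nothing further to prove.

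To prove that $\tt_A$ is an epimorphism, I would exploit the pushout square furnished by proposition~\ref{prop:general}, in which $\tt_A$ appears as the cobase change of $\pi_A\colon \Pi_A\to\Pi$ along $\gamma_A$. The key input is that $\pi_A$ is already known to be an epimorphism: this is recorded in example~\ref{exam:pi}, where $\pi_A$, which sends $A$ to $\uno$, is observed to be epi. It then remains to invoke the standard categorical stability property that cobase change preserves epimorphisms, i.e.\ in any pushout square the arrow opposite an epimorphism is again an epimorphism. This is a short diagram chase: if $h_1,h_2\colon \Tt\to\Tt'$ satisfy $h_1\circ\tt_A=h_2\circ\tt_A$, then precomposing with $\gamma_A$ and using commutativity of the square gives $h_1\circ\gamma\circ\pi_A=h_2\circ\gamma\circ\pi_A$; since $\pi_A$ is epi this forces $h_1\circ\gamma=h_2\circ\gamma$, and then the universal property of the pushout (the pair $(h_i\circ\tt_A,\,h_i\circ\gamma)$ determines $h_i$ uniquely) forces $h_1=h_2$. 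Hence $\tt_A$ is an epimorphism.

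I expect no genuine obstacle here; the corollary is essentially a one-line consequence of proposition~\ref{prop:general} together with the epimorphy of $\pi_A$. The only point that deserves a moment's care is to confirm that the pushout of proposition~\ref{prop:general} and the epimorphism property of the corollary are both read in the same category $\catT_{\eq}$, so that the preservation argument applies verbatim; since all four objects $\Pi_A,\Tt_A,\Pi,\Tt$ and all four arrows of the square live in $\catT_{\eq}$, this is immediate, and the argument above goes through unchanged.
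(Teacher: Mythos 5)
Your proposal is correct and is essentially the paper's own proof: the paper likewise observes that $\pi_A$ is an epimorphism and concludes via stability of epimorphisms under pushout applied to the square of proposition~\ref{prop:general}. The only difference is that you spell out the standard diagram chase for that stability fact, which the paper simply cites.
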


\begin{proof}
Clearly $\pi_A\colon \Pi_A\to\Pi$ is an epimorphism.
Since epimorphisms are stable under pushouts,
proposition~\ref{prop:general} proves that $\tt_A\colon \Tt_A\to\Tt$ 
is also an epimorphism.
\end{proof}

Let $F_{\param}:\catT_{\dec}\to\catT_A$ be the parameterization functor
and let $U\colon \catT_A\to\catT_{\eq}$ be the functor 
which simply forgets that the type $A$ is distinguished,
so that $U \circ F_{\param}\colon \catT_{\dec}\to\catT_{\eq}$ 
maps the decorated theory $\Tt$ to the equational theory $\Tt_A$.
 $$ \xymatrix@C=4pc{ \catT_{\dec} \ar[r]^{F_{\param}} & 
  \catT_A  \ar[r]^{U} &  \catT_{\eq} \\ } $$
Every theory $\Tt$ can be seen as a decorated theory where 
the pure terms are defined inductively as the identities, the projections, 
and the compositions and tuples of pure terms.
Let $I\colon \catT_{\eq}\to\catT_{\dec}$ denote the corresponding inclusion functor.
Then the endofunctor $U \circ F_{\param}\circ I\colon \catT_{\eq}\to\catT_{\eq}$ 
corresponds to the ``\emph{imp} construction'' of \cite{LPR03},
which transforms each term $f\colon X\to Y$ in $\Tt$ into 
$f'\colon A\times X\to Y$ for a new type $A$.

\subsection{The parameter passing process is a natural transformation}
\label{subsec:passing}


A theory $\Tt_a$ with a parameter is built simply by adding 
a constant $a$ of type $A$ to a parameterized theory $\Tt_A$. 
Obviously, this can be seen as a pushout. 

\begin{defi}
\label{defi:adding}
Let $\gamma_A\colon \Pi_A\to\Tt_A$ be a parameterized theory.
The theory with parameter \emph{extending} $\gamma_A$ 
is $\gamma_a\colon \Pi_a\to\Tt_a$ given by the pushout of $\gamma_A$ and $i_A$:
$$ \xymatrix{ 
  \ar@{}[rd]|{\PO} 
  \Pi_A \ar[d]_{i_A} \ar[r]^{\gamma_A} & \Tt_A \ar[d]^{j_A} \\ 
  \Pi_a \ar[r]^{\gamma_a} &\Tt_a \\  
  } $$
\end{defi}

This pushout of theories gives rise to a pullback of categories of models, 
hence for each model $M_A$ of $\Tt_A$ 
the function which maps each model $M_a$ of $\Tt_a$ extending $M_A$
to the element $M_a(a)\in M_A(A)$ defines a bijection:
\begin{equation}
\label{eq:adding}
 \Mod(\Tt_a)|_{M_A}  \isoto M_A(A) \;. 
\end{equation}

Let us assume that the parameterized theory $\gamma_A\colon \Pi_A\to\Tt_A$ 
is $F_{\param}(\Tt)$ for some decorated theory $\Tt$
with pure subtheory $\Tt_0$.
Then the pushout property in definition~\ref{defi:adding} 
ensures the existence of a unique $\tt_a\colon  \Tt_a\to \Tt$
such that $\tt_a \circ \gamma_a = \gamma\circ\pi_a$
and $\tt_a \circ j_A = \tt_A$,
which means that $\tt_a$ maps $A$ to $\uno$ and $a$ to $\id_{\uno}$
and the $\tt_a$ extends $\tt_A$.
Then $\Tt_A$ is a theory under $\Tt_0$ 
and the composition by $j_A$ makes $\Tt_a$ a theory under $\Tt_0$ 
with $j_A$ preserving~$\Tt_0$.

$$ \xymatrix@R=.8pc{
  \Tt_A \ar[dr]^{j_A} \ar[dd]_{\tt_A} & \\
  \ar@{}[r]|(.4){=} & \Tt_a \ar[dl]^{\tt_a} \\ 
  \Tt & \\ 
} $$


\begin{defi}
\label{defi:lax}
For each decorated theory $\Tt$ with pure subcategory $\Tt_0$, 
let $\Tt_A=F_{\param}(\Tt)$ and $\tt_A\colon \Tt_A\to\Tt$ 
as in proposition~\ref{prop:general}, 
and let $\Tt_a$ and $j_A\colon \Tt_A\to\Tt_a$ 
as in definition~\ref{defi:adding}.
Then $j\colon \Tt\to\Tt_a$ is the morphism under $\Tt_0$ 
which maps each type $X$ to $X$ 
and each term $f\colon X\to Y$ to $f'\circ(a\times\id_X)\colon X\to Y$.
And $t\colon j\circ\tt_A \To j_A$ is the natural transformation under $\Tt_0$ 
such that $t_A=a\colon \uno\to A$.
\end{defi}

Lax cocones and lax colimits in 2-categories generalize 
cocones and colimits in categories,
so that the following diagram is a \emph{lax cocone with base $\tt_A$} 
in the 2-category $\Tt_0\cosl\catT_{\eq}$, for short it is denoted $(\Tt_a,j_A,j,t)$,
and it is called \emph{the lax colimit associated to}~$\Tt$ because of lemma~\ref{lemm:passing}.
$$ \xymatrix@R=.8pc{
  \Tt_A \ar[dr]^{j_A} \ar[dd]_{\tt_A} & \\
  \ar@{}[r]|(.4){\trnat}|(.25){t}  & \Tt_a  \\ 
  \Tt \ar[ur]_{j} & \\ 
} $$

\begin{lemm}
\label{lemm:passing}
Let $\Tt$  be a decorated theory with pure subcategory $\Tt_0$.
The lax cocone $(\Tt_a,j_A,j,t)$ with base $\tt_A$ defined above 
is a lax colimit in the 2-category of theories under $\Tt_0$.
\end{lemm}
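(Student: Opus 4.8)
The plan is to verify directly the universal property of the lax colimit, bootstrapping from the lax colimit of $\pi_A$ established in example~\ref{exam:pi} and from the pushout defining $\Tt_a$ in definition~\ref{defi:adding}. So I would fix a competing lax cocone with base $\tt_A$ in $\Tt_0\cosl\catT_{\eq}$: a theory $\Tt'$ under $\Tt_0$ together with morphisms $j'_A\colon \Tt_A\to\Tt'$ and $j'\colon \Tt\to\Tt'$ under $\Tt_0$ and a 2-cell $t'\colon j'\circ\tt_A\To j'_A$ under $\Tt_0$; the goal is to produce a unique $h\colon \Tt_a\to\Tt'$ under $\Tt_0$ with $h\circ j_A=j'_A$, $h\circ j=j'$ and $h\circ t=t'$. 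I would first record that, since $\Tt_0$ is wide in $\Tt$, a 2-cell under $\Tt_0$ between morphisms out of $\Tt_A$ is the identity on every type of $\Tt$; hence both $t$ and $t'$ are trivial except at the type $A$, where $t_A=a$.

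Next I would restrict the cocone along the structure morphisms $\gamma_A\colon \Pi_A\to\Tt_A$ and $\gamma\colon \Pi\to\Tt$. Using $\tt_A\circ\gamma_A=\gamma\circ\pi_A$ from proposition~\ref{prop:general}, the data $i'_A=j'_A\circ\gamma_A$, $i'=j'\circ\gamma$ and $p'$ (the whiskering of $t'$ by $\gamma_A$) form a lax cocone with base $\pi_A$ in $\catT_{\eq}$. The lax colimit property of $\Pi_a$ in example~\ref{exam:pi} then yields a unique $k\colon \Pi_a\to\Tt'$ with $k\circ i_A=i'_A$, $k\circ i=i'$ and $k\circ p=p'$; in particular $k(a)=p'_A=t'_A$. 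Since $\Tt_a$ is the pushout of $\gamma_A$ and $i_A$ (definition~\ref{defi:adding}) and $j'_A\circ\gamma_A=i'_A=k\circ i_A$, the maps $j'_A$ and $k$ agree on $\Pi_A$ and so glue to a unique $h\colon \Tt_a\to\Tt'$ with $h\circ j_A=j'_A$ and $h\circ\gamma_a=k$. That $h$ is under $\Tt_0$ follows from $h\circ j_A=j'_A$ and the fact that the structure map $\Tt_0\to\Tt_a$ factors through $j_A$.

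It then remains to check the cocone identities. The identity $h\circ j_A=j'_A$ holds by construction, and $h\circ t=t'$ reduces to its single nontrivial component at $A$, namely $h(t_A)=h(a)=h(\gamma_a(a))=k(a)=t'_A$. The delicate point, and the main obstacle, is the equation $h\circ j=j'$, because $j$ is not a pushout coprojection but the separately defined morphism sending $f$ to $f'\circ(a\times\id_X)$. Here I would invoke naturality of $t'$ at the parameterized term $f'\colon A\times X\to Y$: since $\tt_A(f')=f$ and $t'$ is the identity on the types $X,Y$ of $\Tt$, the naturality square collapses to $j'(f)=j'_A(f')\circ(t'_A\times\id_{j'(X)})$; comparing with $h(j(f))=j'_A(f')\circ(k(a)\times\id)$ and using $k(a)=t'_A$ gives $h\circ j=j'$ (the case of pure $f$ being immediate, as both sides then live under $\Tt_0$).

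Finally, for uniqueness I would show that any competitor $h'$ satisfies, after precomposition with $\gamma_a$, the three conditions that pin down $k$: the relations $\gamma_a\circ i_A=j_A\circ\gamma_A$ (the pushout square) and $\gamma_a\circ i=j\circ\gamma$ (both being the unique map out of the initial theory $\Pi$), together with $h'\circ t=t'$, force $h'\circ\gamma_a=k$; combined with $h'\circ j_A=j'_A$ and the pushout property of $\Tt_a$, this yields $h'=h$.
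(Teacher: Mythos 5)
Your proof is correct and follows essentially the same route as the paper's: both define $h$ via the pushout of definition~\ref{defi:adding}, determined by $h\circ j_A=j'_A$ together with $h(\gamma_a(a))=t'_A$ (which you obtain, slightly more systematically, from the lax colimit of $\pi_A$ in example~\ref{exam:pi}). The paper's proof is telegraphic and omits exactly the verifications you supply --- notably the check $h\circ j=j'$ via naturality of $t'$ at $f'\colon A\stimes X\to Y$, and the uniqueness argument --- so your write-up is a faithful expansion of the same idea rather than a different approach.
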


\begin{proof}
This means that the given lax cocone 
is initial among the lax cocones with base $\tt_A$ in $\Tt_0\cosl\Tt$,
in the following sense:
for every lax cocone $(\Tt'_a,j'_A,j',t')$ with base $\tt_A$ under $\Tt_0$
there is a unique morphism $ h\colon \Tt_a\to\Tt'_a$ such that 
$ h\circ j_A=j'_A$, $ h\circ j=j'$ and $ h\circ t = t'$.
Indeed, $h$ is defined from the pushout in definition~\ref{defi:adding} 
by $h\circ j_A=j'_A$, so that $h(A)=A$, 
and $h\circ \gamma_a(a)=t'_A:\uno\to A$. 
\end{proof}

For instance, given $\Tt$, $\tt_A\colon  \Tt_A\to \Tt$, $\id_{\Tt}\colon  \Tt\to \Tt$ 
and $\id_{\tt_A}\colon \tt_A \To \tt_A$,
then $\tt_a$ is the unique morphism such that 
$\tt_a\circ j_A = \tt_A$, $\tt_a\circ j = \id_{\Tt}$ and $\tt_a\circ t=\id_{\tt_A}$.

Let $\Tt$ be a decorated theory with pure subtheory $\Tt_0$
and let $(\Tt_a,j_A,j,t)$ be its associated lax colimit, with base $\tt_A\colon \Tt_A\to \Tt$.
Let $M_A$ be a model of $\Tt_A$
and $M_0$ its restriction to $\Tt_0$, 
and let $\{ (M,m) \mid  m\colon  {\tt_A}^\md M \to M_A  \}|_{M_0}$
(where as before ${\tt_A}^\md M=M\circ\tt_A$) 
denote the set of pairs $(M,m)$ with $M$ a model of $\Tt$ extending $M_0$
and $m$ a morphism of models of $\Tt_A$ extending $\id_{M_0}$. 
A consequence of the lax colimit property 
is that the function which maps each model $M_a$ of $\Tt_a$ extending $M_A$
to the pair $(j^\md M_a, t^\md M_a)=(M_a\circ j,M_a\circ t)$ 
defines a bijection: 
\begin{equation}
\label{eq:passing}
 \Mod(\Tt_a)|_{M_A}  \iso
 \{ (M,m) \mid  m\colon  {\tt_A}^\md M \to M_A  \}|_{M_0} \;.
\end{equation}

The bijections~\ref{eq:adding} and~\ref{eq:passing} provide the next result,
which does not involve $\Tt_a$. 

\begin{prop} 
\label{prop:passing}
Let $\Tt$ be a decorated theory with pure subtheory $\Tt_0$
and let $\Tt_A=F_{\param}(\Tt)$ and $\tt_A\colon \Tt_A\to \Tt$.
Then for each model $M_A$ of $\Tt_A$, 
with $M_0$ denoting the restriction of $M_A$ to $\Tt_0$, 
the function which maps each element $\alpha\in M_A(A)$
to the pair $(M,m)$, 
where $M$ is the model of $\Tt$ such that $M(f)=M_A(f')(\alpha,-)$ 
and where $m:{\tt_A}^\md M \to M_A$ is the morphism of models of $\Tt_A$ 
such that $m_A:M(\uno)\to M_A(A)$ is the constant function $\alpha$, 
defines a bijection: 
\begin{equation}
\label{eq:adding-passing}
M_A(A)  \iso
 \{ (M,m) \mid  m\colon  {\tt_A}^\md M \to M_A  \}|_{M_0} \;.
\end{equation}
\end{prop}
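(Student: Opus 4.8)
The plan is to obtain the bijection~\ref{eq:adding-passing} by composing the two bijections already established, namely~\ref{eq:adding} and~\ref{eq:passing}, so that $\Tt_a$ serves only as an intermediary that is eliminated at the end. Concretely, I would form the composite
$$ M_A(A) \isoto \Mod(\Tt_a)|_{M_A} \iso \{ (M,m) \mid m\colon {\tt_A}^\md M \to M_A \}|_{M_0} \;,$$
where the first map is the inverse of the parameter adding bijection~\ref{eq:adding} and the second is the parameter passing bijection~\ref{eq:passing}. Since a composite of bijections is a bijection, this immediately yields that~\ref{eq:adding-passing} is a bijection; the real content of the proposition is then to check that the composite is the \emph{specific} function described in the statement, i.e. that tracing an element $\alpha$ through both bijections produces exactly the pair $(M,m)$ with $M(f)=M_A(f')(\alpha,-)$ and $m_A$ the constant function $\alpha$.

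First I would recall, from bijection~\ref{eq:adding} and its discussion, that the inverse sends $\alpha\in M_A(A)$ to the unique model $M_a=(M_A,\alpha)$ of $\Tt_a$ extending $M_A$ with $M_a(a)=\alpha$. Next I would apply bijection~\ref{eq:passing}, which sends $M_a$ to the pair $(j^\md M_a, t^\md M_a)=(M_a\circ j, M_a\circ t)$. It then remains to identify these two components. For the first, using definition~\ref{defi:lax}, the morphism $j$ maps a term $f\colon X\to Y$ of $\Tt$ to $f'\circ(a\times\id_X)$, so $M_a(j(f))=M_a(f')(M_a(a),-)=M_A(f')(\alpha,-)$, which is precisely the interpretation $M(f)$ demanded in the statement; hence $j^\md M_a=M$. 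For the second, again by definition~\ref{defi:lax} the natural transformation $t$ has component $t_A=a\colon \uno\to A$, so $t^\md M_a$ is the morphism of models whose component at $A$ is $M_a(a)=\alpha$, namely the constant function $\alpha\colon M(\uno)\to M_A(A)$; hence $t^\md M_a=m$. This matches the description of $(M,m)$ in the proposition, so the composite is exactly the stated function.

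The main obstacle, and the step deserving the most care, is verifying that the two composed bijections respect the restriction to $M_0$ in a compatible way, so that the composite really lands in the set $\{(M,m)\}|_{M_0}$ as written rather than in some larger set. Here I would invoke the fact, noted after definition~\ref{defi:lax} and in proposition~\ref{prop:general}, that all the morphisms involved ($\gamma_a$, $j_A$, $j$, and $\tt_A$) are morphisms \emph{under} $\Tt_0$, so that both the parameter adding and parameter passing bijections preserve the given restriction of $M_A$ to the pure subtheory $\Tt_0$; this is what guarantees that $M$ extends $M_0$ and that $m$ extends $\id_{M_0}$. Since everything is a formal consequence of the already-proven bijections~\ref{eq:adding} and~\ref{eq:passing} together with the explicit formulas for $j$ and $t$, no genuinely new construction is required, and the proof reduces to this bookkeeping of components.
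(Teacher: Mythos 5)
Your proposal is correct and follows exactly the paper's route: the paper derives proposition~\ref{prop:passing} precisely by composing the parameter adding bijection~(\ref{eq:adding}) with the lax-colimit bijection~(\ref{eq:passing}), which is what you do. Your explicit tracing of $\alpha$ through the composite (via $j(f)=f'\circ(a\times\id_X)$ and $t_A=a$) and your check of the restriction to $M_0$ simply spell out the bookkeeping the paper leaves implicit.
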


As an immediate consequence, 
we get the \emph{exact parameterization} property from \cite{LPR03}. 

\begin{coro} 
\label{coro:exact}
Let $\Tt$ be a decorated theory with pure subcategory $\Tt_0$,
and let $\Tt_A=F_{\param}(\Tt)$.
Let $M_0$ be a model  of $\Tt_0$
and $M_A$ a terminal model of $\Tt_A$ extending $M_0$. 
Then there is a bijection: 
\begin{equation}
\label{eq:exact}
M_A(A)  \iso \Mod(\Tt)|_{M_0} 
\end{equation}
which maps each $\alpha\in M_A(A)$ to the model 
$M_{A,\alpha}$ of $\Tt$ defined by 
$M_{A,\alpha}(X)=M_0(X)$ for each type $X$
and $M_{A,\alpha}(f)=M_A(f')(\alpha,-)$ for each term $f$,
so that $M_{A,\alpha}(f)=M_A(f)$ for each pure term $f$.
\end{coro}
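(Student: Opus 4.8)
The plan is to obtain the result directly from proposition~\ref{prop:passing}, using the terminality hypothesis to collapse the right-hand side of bijection~\ref{eq:adding-passing} onto $\Mod(\Tt)|_{M_0}$. Proposition~\ref{prop:passing} already supplies a bijection $M_A(A)\iso\{(M,m)\mid m\colon{\tt_A}^\md M\to M_A\}|_{M_0}$ together with the explicit rule $\alpha\mapsto(M_{A,\alpha},m)$, where $M_{A,\alpha}(f)=M_A(f')(\alpha,-)$. So it suffices to show that the forgetful map $(M,m)\mapsto M$ from this set of pairs onto $\Mod(\Tt)|_{M_0}$ is itself a bijection; composing the two bijections then yields~\ref{eq:exact} with the stated formula for $M_{A,\alpha}$.

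First I would check that for every model $M$ of $\Tt$ extending $M_0$ the model ${\tt_A}^\md M=M\circ\tt_A$ of $\Tt_A$ lies in $\Mod(\Tt_A)|_{M_0}$. Since $\Tt_A$ is a theory under $\Tt_0$ and $\tt_A\colon\Tt_A\to\Tt$ is a morphism under $\Tt_0$, restricting ${\tt_A}^\md M$ along the inclusion $\Tt_0\hookrightarrow\Tt_A$ reproduces $M$ restricted along $\Tt_0\hookrightarrow\Tt$, which is $M_0$ by hypothesis; hence ${\tt_A}^\md M$ extends $M_0$. Now the \emph{terminality} of $M_A$ does the work: because $M_A$ is a terminal object of $\Mod(\Tt_A)|_{M_0}$, for each such $M$ there is exactly one morphism $m\colon{\tt_A}^\md M\to M_A$ extending $\id_{M_0}$. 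This says precisely that the forgetful map has a single preimage over each $M$, so it is a bijection onto $\Mod(\Tt)|_{M_0}$ (which is discrete, since the wide inclusion $\Tt_0\hookrightarrow\Tt$ is surjective on types).

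It remains to record the behaviour on pure terms. For a pure term $f\colon X\rpto Y$ the parameterization functor leaves $f$ unparameterized (figure~\ref{fig:F-elem}), so its image in $\Tt_A$ may be written $f'=f\circ\eps_X\colon A\times X\to Y$; hence $M_A(f')(\alpha,-)=M_A(f\circ\eps_X)(\alpha,-)=M_A(f)$, independently of $\alpha$, giving $M_{A,\alpha}(f)=M_A(f)$ as claimed. I expect no genuine obstacle here: the single delicate point is the verification that ${\tt_A}^\md M$ really lands in the slice $\Mod(\Tt_A)|_{M_0}$ in which $M_A$ is terminal, and this is exactly the compatibility of $\tt_A$ with the pure subtheory $\Tt_0$.
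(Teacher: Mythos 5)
Your proposal is correct and follows exactly the route the paper intends: the paper derives corollary~\ref{coro:exact} as an ``immediate consequence'' of proposition~\ref{prop:passing}, with the terminality of $M_A$ in $\Mod(\Tt_A)|_{M_0}$ collapsing the pairs $(M,m)$ of bijection~\ref{eq:adding-passing} onto $\Mod(\Tt)|_{M_0}$, precisely as you argue. Your additional verifications --- that ${\tt_A}^\md M$ extends $M_0$ because $\tt_A$ is a morphism under $\Tt_0$ (proposition~\ref{prop:general}), and that $f'=f\circ\eps_X$ for pure $f$ gives $M_{A,\alpha}(f)=M_A(f)$ --- simply make explicit the details the paper leaves implicit.
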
 

The existence of a terminal model of $\Tt_A$ extending $M_0$ 
is a consequence of \cite{Ru00} and \cite{HR95}.
Corollary~\ref{coro:exact} corresponds to the way algebraic structures are implemented 
in the systems Kenzo/EAT.
In these systems 
the parameter set is encoded by means of a record of Common Lisp functions, 
which has a field for each operation in the algebraic structure to be implemented. 
The pure terms correspond to functions which can be obtained 
from the fixed data and do not require an explicit storage. 
Then, each particular instance of the record gives rise to an algebraic structure.


Clearly the construction of $\gamma_a$ from $\gamma_A$ 
is a functor, which is left adjoint to the functor 
which simply forgets that the constant $a$ is distinguished.
So, by composing this adjunction 
with the adjunction $F_{\param}\dashv G_{\param}$ from theorem~\ref{theo:gene} 
we get an adjunction $F'_{\param}\dashv G'_{\param}$
where $F'_{\param}$ maps each decorated theory $\Tt$ to $\Tt_a$, as defined above: 
  $$ \xymatrix@C=4pc{ \catT_{\dec} \ar@/^/[r]^{F'_{\param}} \ar@{}[r]|{\bot} & 
  \catT_a \ar@/^/[l]^{G'_{\param}} } $$
Let $U'\colon \catT_a\to\catT_{\eq}$ be the functor 
which simply forgets that the type $A$ and the constant $a$ are distinguished. 
Then the functor $U' \circ F'_{\param} \colon \catT_{\dec}\to\catT_{\eq}$ 
maps the decorated theory $\Tt$ to the equational theory $\Tt_a$.
  $$ \xymatrix@C=3pc{ 
  \catT_{\dec} \ar[r]^{F'_{\param}} &  \catT_a  \ar[r]^{U'} &  \catT_{\eq} \\ } $$ 
The morphism of theories $j\colon \Tt\to\Tt_a$ from definition~\ref{defi:lax}
depends on the decorated theory $\Tt$, let us denote it $j=J_{\Tt}$.
Let $H\colon \catT_{\dec}\to\catT_{\eq}$ be the functor which maps 
each decorated theory $\Tt$ to the equational theory $\Tt$.
The next result is easy to check.

\begin{theo}
\label{theo:passing}
The morphisms of theories $J_{\Tt}\colon \Tt\to\Tt_a$ form the components of 
a natural transformation $J\colon H \To U'\circ F'_{\param}\colon \catT_{\dec}\to\catT_{\eq}$.
  $$ \xymatrix@C=3pc{ 
  \catT_{\dec} \ar[r]^{F'_{\param}}  \ar@/_4ex/[rr]_{H}^(.4){\Uparrow}^(.45){J}  & 
  \catT_a  \ar[r]^{U'} &  \catT_{\eq} \\ } $$ 
\end{theo}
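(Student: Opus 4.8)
The plan is to verify that $J$, defined component-wise by $J_{\Tt} = (j\colon \Tt\to\Tt_a)$ from definition~\ref{defi:lax}, is genuinely a natural transformation between the two stated functors $H$ and $U'\circ F'_{\param}$. Since both functors have source $\catT_{\dec}$ and target $\catT_{\eq}$, and each $J_{\Tt}$ is already known to be a morphism in $\catT_{\eq}$ (forgetting the decoration on the source and the distinguished structure on the target), the only thing left to check is the naturality square. First I would fix a morphism of decorated theories $\tt\colon \Tt\to\Tt'$ and write down the square to be verified: $U'\circ F'_{\param}(\tt)\circ J_{\Tt} = J_{\Tt'}\circ H(\tt)$, where $H(\tt)$ is just $\tt$ viewed as a morphism of equational theories, and $U'\circ F'_{\param}(\tt)$ is the induced morphism $\Tt_a\to\Tt'_a$ obtained by applying the parameterization functor $F_{\param}$, then the parameter-adding pushout, then forgetting.

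The heart of the argument is a diagram chase showing both composites agree on types and on terms. On types this is immediate: $J_{\Tt}$ is the identity on types, $F_{\param}$ and the pushout are the identity on the types coming from $\Tt$, so both sides send a type $X$ to the type $\tt(X)$. On terms, I would trace a term $f\colon X\to Y$ of $\Tt$ through both paths. Along $J_{\Tt}$ it becomes $f'\circ(a\times\id_X)$ in $\Tt_a$, and then the induced map to $\Tt'_a$ sends the parameterized term $f'$ to $\tt(f)'$ (because $F_{\param}$ acts functorially on the decorated morphism $\tt$, carrying $f'$ to $(\tt f)'$) while fixing the adjoined constant $a$; the result is $(\tt f)'\circ(a\times\id_{\tt X})$. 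Along the other path, $H(\tt)$ sends $f$ to $\tt(f)$ in $\Tt'$, and $J_{\Tt'}$ then sends $\tt(f)$ to $(\tt f)'\circ(a\times\id_{\tt X})$ in $\Tt'_a$. These coincide, which is exactly naturality.

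The main obstacle I anticipate is purely bookkeeping rather than conceptual: one must be careful that $F_{\param}$ is functorial on decorated morphisms in the precise way the chase requires, namely that it commutes with the formation of primed (parameterized) terms, and that the pushout construction of definition~\ref{defi:adding} is itself functorial and fixes the parameter $a$ under $\tt$. Both facts are already implicit in the preceding development: functoriality of $F_{\param}$ is definition~\ref{defi:gene}, and functoriality of the $\gamma_A\mapsto\gamma_a$ construction together with its left-adjointness was noted just before the theorem. The one subtlety worth stating explicitly is that $J$ must respect the decoration, i.e.\ pure terms of $\Tt$ are treated correctly; but a pure term $f\colon X\rpto Y$ has $F_{\param}(f)=f\circ\eps_X$, so $J_{\Tt}(f)=f\circ\eps_X\circ(a\times\id_X)=f$ (since $\eps_X\circ(a\times\id_X)=\id_X$), and the same simplification on the $\Tt'$ side makes the pure case of the naturality square trivially commute.

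\medskip
This reduces the whole statement to the two elementary computations above, so once the functoriality of $F_{\param}$ and of the parameter-adding pushout are invoked, the verification is routine. Hence the naturality square commutes for every $\tt$, and $J\colon H\To U'\circ F'_{\param}$ is a natural transformation, as claimed.
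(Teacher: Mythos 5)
Your proposal is correct and takes essentially the same approach as the paper, which offers no argument beyond the remark that the result is ``easy to check'': your chase on types, on general terms (using that $F_{\param}(\tt)$ carries $f'$ to $(\tt f)'$ and that the induced morphism on the parameter-adding pushouts fixes $a$), and on pure terms (where $J_{\Tt}(f)=f$ since $\eps_X\circ(a\times\id_X)=\id_X$) is precisely the routine verification the paper intends. Nothing is missing.
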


\begin{defi}
\label{defi:passing}
The natural transformation $J\colon H \To U'\circ F'_{\param}\colon \catT_{\dec}\to\catT_{\eq}$
in theorem~\ref{theo:passing} is called the \emph{parameter passing natural transformation}.
\end{defi}

\subsection{Examples} 
\label{subsec:more-exam}

\begin{exam}
\label{exam:adding}
Starting from $\Tt_{\oper}$ and $\Tt_{\oper,0}$ as in example~\ref{exam:term}, 
the pushouts of theories from proposition~\ref{prop:general} 
and definition~\ref{defi:adding} are respectively: 

$$ \begin{array}{|c|c|c|c|c|c|c|}
\cline{1-1} \cline{3-3} \cline{5-5} \cline{7-7} 
  \xymatrix{A \\} &  
  \longrightarrow & 
  \xymatrix@R=1pc{A & A\stimes X \ar[l]\ar[d]\ar[rd]^{f'} & \\ & X & Y \\} &
  \qquad  \qquad & 
  \xymatrix@R=1pc{A \\} &  
  \longrightarrow & 
  \xymatrix@R=1pc{A & A\stimes X \ar[l]\ar[d]\ar[rd]^{f'} & \\ & X & Y \\} \\
\cline{1-1} \cline{3-3} \cline{5-5} \cline{7-7} 
  \col{\downarrow} & 
  \col{} & 
  \col{\downarrow} & 
  \col{} & 
  \col{\downarrow} & 
  \col{} & 
  \col{\downarrow} \\
\cline{1-1} \cline{3-3} \cline{5-5} \cline{7-7} 
  \xymatrix@R=1pc{ \\ \uno  \\ } &  
  \longrightarrow & 
  \xymatrix@R=1pc{ \\ & X\ar[r]^{f} &Y \\ } & & 
  \xymatrix@R=1pc{A \\ \uno \ar[u]^{a} \\ } &  
  \longrightarrow & 
  \xymatrix@R=1pc{ A & A\stimes X \ar[l]\ar[d]\ar[rd]^{f'} & \\ \uno \ar[u]^{a} & X & Y \\ } \\ 
\cline{1-1} \cline{3-3} \cline{5-5} \cline{7-7} 
\end{array}  $$

We have seen in example~\ref{exam:term} 
two other presentations of the vertex $\Tt_{\oper,a}$ of the second pushout,
with $f''=f'\circ(a\times\id_X):X\to Y$.
For each decorated theory $\Tt$, 
the morphism of equational theories $j_{\oper}=J_{\Tt_{\oper}}:\Tt\to\Tt_a$ 
maps $f$ to $f''$, as in example~\ref{exam:term}.

A model $M_0$ of $\Tt_{\oper,0}$ is simply made of two sets 
$\setX=M_0(X)$ and $\setY=M_0(Y)$.
On the one hand, a model of $\Tt$ extending $M_0$ is characterized 
by a function $\varphi\colon \setX\to\setY$.
On the other hand, 
the terminal model $M_A$ of $\Tt_{\oper,A}$ extending $M_0$ 
is such that $M_A(A)=\setY^\setX$ and 
$M_A(f')\colon \setY^{\setX} \times \setX \to \setY$ is the application. 
The bijection $\Mod(\Tt)|_{M_0}\iso M_A(A)$ then 
corresponds to the currying bijection $\varphi\mapsto\curry{\varphi}$. 
\end{exam}

\begin{exam}
\label{exam:dm-terminal}
Let $\Tt_{\dm}$ be the theory for differential monoids from example~\ref{exam:dm},
with the pure subtheory $\Tt_{\dm,0}=\Tt_{\mon}$ of monoids.
They generate the parameterized theory $\Tt_{\dm,A}$ as in example~\ref{exam:dm}.
Let $M_0$ be some fixed monoid 
and $M_A$ any model of $\Tt_{\dm,A}$ extending $M_0$,
then each element of $M_A(A)$ corresponds to a
differential structure on the monoid $M_0$.
If in addition $M_A$ is the terminal model of $\Tt_{\dm,A}$ extending $M_0$, 
then this correspondence is bijective.
\end{exam}

\begin{exam}
\label{exam:state}
When dealing with an imperative language, 
the states for the memory are endowed with an operation $\lup$
for observing the state and an operation $\upd$ for modifying it. 
There are two points of view on this situation: either the state is hidden,
or it is explicit.
Let us check that the parameterization process allows to generate 
the theory with explicit state from the theory with hidden state.

First, let us focus on observation: 
the theory $\Tt_{\st}$ is made of two types $L$ and $Z$
(for locations and integers, respectively) and a term $v\colon L\to Z$
for observing the values of the variables.
The pure subtheory $\Tt_{\st,0}$ is made of $L$ and $Z$.
We choose a model $M_0$ of $\Tt_{\st,0}$ made of a countable set 
of locations (or adresses, or ``variables'') $\bL=M_0(L)$ 
and of the set of integers $\bZ=M_0(Z)$.
Let $\bA=\bZ^\bL$, then as in example~\ref{exam:adding} 
the terminal model $M_A$ of $\Tt_{\st,A}$ extending $M_0$ 
is such that $M_A(A)=\bA$ and $M_{\st,A}(v')\colon \bA\times\bL\to\bZ$ 
is the application, denoted $\lup$. 
The terminal model $M_A$ does correspond to an ``optimal'' implementation 
of the state.

Now, let us look at another model $N_A$ of $\Tt_{\st,A}$ extending $M_0$,
defined as follows: 
$N_A(A)=\bA\times \bL\times \bZ$ and 
$N_A(v')\colon \bA\times \bL\times \bZ\times \bL\to\bZ$
maps $(p,x,n,y)$ to $n$ if $x=y$ and to $\lup(p,y)$ otherwise.
The terminality property of $M_A$ ensures that there is a unique 
function $\upd\colon \bA\times \bL\times \bZ \to \bA$ such that 
$\lup(\upd(p,x,n),y)$ is $n$ if $x=y$ and is $\lup(p,y)$ otherwise.
So, the operation $\upd$ is defined coinductively from the operation $\lup$. 
\end{exam}

\section{Conclusion}

This paper provides a neat categorical formalization for 
the parameterization process in Kenzo and EAT.
An additional level of abstraction allows to see 
the parameterization process as a morphism of logics and the 
parameter passing process as a 2-morphism of logics,
in a relevant 2-category of logics \cite{eatDiaLog}. 
Future work includes the generalization of this approach 
from equational theories to other families of theories, like distributive categories,
and to more general kinds of parameters, like data types.


\end{document}